\numberwithin{table}{section}
\numberwithin{figure}{section}
\numberwithin{equation}{section}
\newtheorem{lemma}{Lemma}
\newtheorem{corollary}{Corollary}
\newtheorem*{corollary1}{Corollary 1}
\newtheorem*{lemma1}{Lemma 1}
\begin{document}

\title{Design-based conformal prediction}
\author{Jerzy Wieczorek\footnote{Jerzy Wieczorek, Department of Statistics, Colby College, Waterville, Maine, USA. E-mail: jawieczo@colby.edu.}}

\maketitle

\doublespacing

\begin{abstract}
Conformal prediction is an assumption-lean approach to generating distribution-free prediction intervals or sets, for nearly arbitrary predictive models, with guaranteed finite-sample coverage.
Conformal methods are an active research topic in statistics and machine learning, but only recently have they been extended to non-exchangeable data. In this paper, we invite survey methodologists to begin using and contributing to conformal methods. We introduce how conformal prediction can be applied to data from several common complex sample survey designs, under a framework of design-based inference for a finite population, and we point out gaps where survey methodologists could fruitfully apply their expertise. Our simulations empirically bear out the theoretical guarantees of finite-sample coverage, and our real-data example demonstrates how conformal prediction can be applied to complex sample survey data in practice.
\end{abstract}

\paragraph{Keywords:}
Conformal prediction; Machine learning; Cross validation; Predictive modeling;
Complex sample survey designs.

\section{Introduction}\label{sec:Introduction}


\paragraph{What is conformal prediction?}
Conformal prediction, also called conformal inference, is a family of general-purpose approaches to constructing prediction intervals or prediction sets, which can be wrapped around almost any predictive modeling algorithm. Specifically, imagine that we have fit a function $\hat f_n$ to a set of training data $(X_i,Y_i)$ for $i=1,\ldots,n$ that allows us to make a point prediction for a new observation's response value $Y_{n+1}$ when $\hat f_n$ is evaluated at the covariate value $X_{n+1}$, and our goal is to report a level $1-\alpha$ prediction set for $Y_{n+1}$. Let us require only that all $n+1$ of the points $(X_i,Y_i)$ are an
exchangeable\footnote{A sequence of random variables is exchangeable when its joint probability distribution would not change if you permuted the random variables. More precisely, a sequence $X_1,X_2,\ldots$ is exchangeable if for each $n$ and each permutation $\pi$ of $\{1,\ldots,n\}$, the joint distribution of $(X_{\pi(1)}, \ldots, X_{\pi(n)})$ is just the same as the distribution of $(X_1,\ldots,X_n)$ \citep{durrett2019probability}. Examples of exchangeability include iid sequences as well as simple random sampling (with or without replacement).}
sample from some common distribution $P$, and that the algorithm used to fit $\hat f_n$ treats these points symmetrically (which rules out e.g.\ algorithms that give more weight to more-recent data over time).
If so, we can use standard conformal prediction methods to construct $\hat C_n$ which are either prediction intervals (for a regression problem) or prediction sets (for a classification problem), such that $\mathbb{P}[Y_{n+1} \in \hat C_n(X_{n+1})] \geq 1-\alpha$,
where the probability is taken over repeated sampling of all $n+1$ points. In Section~\ref{sec:Intuition} we describe two standard approaches to conformal prediction and describe the intuition behind how they work.

Although survey data analysis has traditionally focused on estimation and testing, predictive modeling with complex sample survey data is not uncommon. To name just a few examples: \cite{hong2010prediction} use the Second Longitudinal Study on Aging \citep{nchs2016lsoa} to fit a predictive model that can be used to monitor functional mobility status among the elderly. \cite{kshirsagar2017household} use 2015 Living Conditions Monitoring Survey data \citep{cso2015lcsm} to build a model to classify households by poverty level, at a range of poverty thresholds. \cite{krebs2019predicting} use Forest Inventory and Analysis data \citep{bechtold2005enhanced} to build random-forest models that can predict understory vegetation structure, intending for these models to be implemented in the Forest Vegetation Simulator \citep{crookston2005forest}. In each case, there is interest in using the model to make unit-level predictions, not just to estimate regression coefficients or population means. Yet, with the exception of \cite{hong2010prediction}, many such papers have not been able to provide prediction intervals or sets.

\paragraph{Why is this interesting to survey methodologists?}
For practitioners, survey data analysts who build predictive models---like those mentioned above---could apply conformal methods to provide design-based prediction intervals or sets (Section~\ref{sec:Methods}). Such prediction intervals or sets would have guaranteed coverage, even for novel machine learning algorithms.
Conformal methods may not add much value to older methods like linear regression, which already has well-known Gaussian-based prediction intervals and has already been adapted to account for complex survey designs. But for newer methods---such as prediction algorithms that have not yet been adapted to work optimally with survey designs, or non-probabilistic algorithms that do not come with ``built-in'' prediction intervals
such as the simulation models of \cite{leroy2021conformal}---we can apply conformal prediction methods that guarantee coverage based on the sampling design alone.
Those who collect and pre-process the survey data may also find uses for conformal methods, for instance in nonresponse prediction, imputation, or data cleaning (Section~\ref{sec:Extensions}).

More philosophically, the principles behind conformal methods align with complex survey sampling inference. [a] Conformal methods and many design-based methods provide exact, finite-sample coverage guarantees, not asymptotic or approximate guarantees.
[b] Unlike traditional model-based approaches (such as Gaussian-errors prediction intervals for regression), conformal prediction does not require assumptions about the distribution $P$---only exchangeable sampling. Likewise, design-based methods do not require assumptions about how the data values are distributed in the population---only knowledge of the sampling design.
Finally, [c] conformal guarantees hold even if the predictive model is not a good fit to $P$ (though of course the prediction intervals or sets may be larger then), and many model-assisted methods in design-based inference also have guarantees that hold whether or not the model is a good fit to the population.

Hence, we believe this is a good opportunity to cross-pollinate ideas between the survey methodology and machine learning research communities. The growth of conformal prediction indicates that machine learning practitioners are seeking methods that are guaranteed to work based only on the sampling design, not on the distribution of the data values. Survey methodologists, having the right expertise to meet that demand, may enjoy applying their skillset to this new challenge.

\paragraph{Why is this not already in use for survey sampling?}
Despite the apparent affinities, conformal methods have not previously been studied under a complex sample survey framework. Conformal prediction has been under development for several decades, initially by Vladimir Vovk and colleagues \citep{vovk2022algorithmic} and more recently by a wide range of statistical and machine learning researchers. Conformal methods are starting to be deployed in real-world settings, such as the Washington Post's 2020 presidential election tracker, which reported prediction intervals of the votes for each party that were updated in real time as voting districts slowly reported their results \citep{cherian2020washington}. Yet the requirement of exchangeability has made traditional conformal methods inapplicable for complex sampling designs other than simple random sampling (with or without replacement).

However, recently \cite{tibshirani2019conformal} extended conformal prediction to what they call ``weighted exchangeable'' sampling, while \cite{dunn2022distribution} derived several ``hierarchical'' conformal methods that can apply to cluster sampling. The present work builds on these two papers to begin providing conformal guarantees for complex sampling designs.

Briefly, instead of assuming that all $n+1$ data points are exchangeable, we will assume that the $n$ training points came from a known complex sampling design on a finite population, and the test point was selected uniformly at random from the same finite population.
The latter assumption is simply a way of phrasing the guarantee of marginal coverage across the finite population.
In Section~\ref{sec:Methods}, we summarize the complex survey settings in which such conformal prediction methods are currently known to have coverage guarantees.

\subsection{Related work}


Prediction intervals or tolerance regions have a long history in statistics; for a recent review, see \cite{tian2022methods}.
Section 13.3.3 of \cite{vovk2022algorithmic} connects conformal methods to much of this earlier work.
However, conformal prediction is currently a broad and very active research area, and we encourage interested readers to look into the following general resources: \cite{angelopoulos2022gentle} provide an introductory review article; Vovk and colleagues maintain a list of their own recent work at \url{http://alrw.net/}; \cite{manokhin2022awesome} maintains a frequently updated list of conformal papers, tutorials, and software; and COPA, the Symposium on Conformal and Probabilistic Prediction with Applications, is an annual conference on conformal prediction and its extensions: \url{https://copa-conference.com/}.

For dependent data, our paper builds largely on the work of \cite{dunn2022distribution} and \cite{tibshirani2019conformal}.
\cite{fong2021conformal} give a distinct, Bayesian approach to conformal prediction for hierarchical data.
\cite{barber2022conformal} relax the usual conformal-prediction requirement that the fitted prediction model must treat data points symmetrically, and \cite{fannjiang2022conformal} address ``feedback covariate shift,'' where the choice of which test data to sample depends on results from the training data, which could be useful in addressing adaptive sampling for surveys \citep{thompson1997adaptive}.
Conformal methods for nonstationarity over time \citep{chernozhukov2018exact, oliveira2022split}, including distribution drift \citep{gibbs2021adaptive}, would be useful in developing conformal methods for panel surveys. Recently, \cite{lunde2023validity} developed conformal methods for network data under non-uniform sampling, with implications for respondent-driven sampling in surveys.

Few papers have applied conformal methods to survey samples. \cite{bersson2022optimal} do use conformal prediction for small area estimation problems which rely on survey data. However, they work in a model-based framework and assume that sampled units are exchangeable within each small area, rather than focusing on design-based inference under a complex sampling design.
\cite{romano2019conformalized} and several followup papers \citep{sesia2020comparison, sesia2021conformal, feldman2021improving, bai2022efficient} illustrate conformal prediction methods using unit-level records from a complex survey: the Medical Expenditure Panel Survey \citep{ahrq2017meps}. However, they appear to ignore the documented sampling design and treat it as exchangeable.

Apart from conformal methods, there has been increasing interest in applying machine learning methods to complex survey data. \cite{dagdoug2022model} summarize the current state of the art for using survey data with many machine learning algorithms: penalized regression \citep{mcconville2017model}, k nearest neighbors \citep{yang2019nearest}, random forests \citep{dagdoug2021model}, and others. However, along with \cite{sande2021design}, they focus on estimation of means and totals, rather than on predicting individual response values. Even when such algorithms have been adapted to make individual predictions that account for the survey design, many still do not offer procedures for creating prediction intervals, which design-based conformal methods could provide.

\subsection{Our contributions}

We introduce survey methodologists to key definitions and intutions behind standard conformal methods under exchangeable sampling as well as their extensions to complex sampling (Section~\ref{sec:Intuition}).
Furthermore, we derive exact, finite-sample, design-based coverage guarantees for applying conformal inference to data from several fundamental sampling designs, building on results from the weighted exchangeable or hierarchical frameworks (Section~\ref{sec:Methods}).

Next, we illustrate our design-based guarantees through simulations and a real-data example (Section~\ref{sec:Examples}). We show that results can differ on real data depending on whether or not conformal methods account for the survey design, and we show how conformal methods are affected by different sampling designs.
We then discuss several practical considerations and future challenges to be addressed in applying conformal methods to survey data (Section~\ref{sec:Extensions}). We also suggest ways that conformal methods might be useful in conducting surveys.
Finally, we invite survey researchers to contribute to the literature on conformal methods (Section~\ref{sec:Conclusion}). We anticipate that advances from the design-based perspective will turn out to be useful to the general community of conformal researchers.

\section{Introduction to conformal inference}\label{sec:Intuition}

\subsection{Definitions}

Because one of our main contributions is a corollary of results in \cite{tibshirani2019conformal}, we restate without proof some of their notation, definitions, and results here and in Section~\ref{sec:Methods}.

Let $\mathrm{Quantile}(\beta; F)$ denote the level $\beta$ quantile of distribution $F$, so that for $Y\sim F$,
\[\mathrm{Quantile}(\beta;F) = \inf \{y: \mathbb{P}\{Y \leq y\} \geq \beta\}. \]
We allow for distributions on the augmented real line, $\mathbb{R} \cup \{\infty\}$. We use $v_{1{:}n}=\{v_1,\ldots,v_n\}$ to denote a multiset, meaning that it is unordered and can allow the same element to appear several times. We use $\delta_a$ to denote a point mass at the value $a$. If $v_{1{:}n}$ is an exchangeable sample, its empirical probability measure is $n^{-1}\sum_{i=1}^n \delta_{v_i}$, and the level $\beta$ quantile of its empirical distribution is $\mathrm{Quantile}(\beta; v_{1{:}n})$ which is the $\lceil \beta n \rceil$ smallest value in $v_{1{:}n}$, where $\lceil \cdot \rceil$ is the ceiling function.

Now we can state the general-purpose ``quantile lemma'' that forms the foundation for conformal inference in the exchangeable setting. For instance, if we apply this lemma directly to data from an exchangeable sample of scalar random variables $V_1,\ldots,V_n$, we obtain a level $\beta$ one-sided prediction interval for a new observation $V_{n+1}$.

\begin{lemma1}[\cite{tibshirani2019conformal}]
If $V_1,\ldots, V_{n+1}$ are exchangeable random variables, then
for any $\beta\in(0,1)$, we have
\[\mathbb{P}\left\{ V_{n+1} \leq \mathrm{Quantile}\left(
\beta; V_{1{:}n} \cup {\infty}
\right) \right\} \geq \beta.\]
Furthermore, if ties between $V_1,\ldots, V_{n+1}$ occur with probability zero, then the above probability is upper bounded by $\beta + 1/(n+1)$.
\end{lemma1}

In order to use this lemma to do conformal prediction for nontrivial regression or classification problems, we must also choose a \emph{score function} $\mathcal{S}$, which takes the following arguments: a point $(x,y)$, and a multiset $Z$ (meaning that $\mathcal{S}$ must treat the points in $Z$ as unordered). The points in $Z$ will typically be the $n$ observations $Z_i=(X_i,Y_i)$ for $i=1,\ldots,n$, possibly along with one additional hypothetical observation or test case; $(x,y)$ is typically one of the points in $Z$. $\mathcal{S}$ should return a real value, such that lower values indicate that $(x,y)$ ``conforms'' to $Z$ better. Finally, use $\mathcal S$ to define \emph{nonconformity scores}
\begin{equation}\label{eqn:Tibs3}
V_i^{(x,y)} = \mathcal{S}\left(Z_i,Z_{1{:}n} \cup \{(x,y)\}\right),\ i=1,\ldots,n,
\ \mathrm{and} \
V_{n+1}^{(x,y)} = \mathcal{S}\left((x,y), Z_{1{:}n}\cup\{(x,y)\}\right).
\end{equation}

The reason we require $Z$ to be an unordered multiset is to ensure that if the observations $(X_i,Y_i)$ are exchangeable, then the scores $V_i^{(x,y)}$ will be too.

For instance, in a regression context, we might choose
$\mathcal{S}\left((x,y),Z\right) = |y-\hat{f}(x)|$ where $\hat f$ is some regression function fitted using all of $Z$, including $(x,y)$. If the absolute-residual nonconformity score $V_{n+1}^{(x,y)}$ is small relative to all of the scores $V_i^{(x,y)}$, this suggests that $(x,y)$ conforms well to the overall trend in $Z$.

In a classification setting, we might choose $\mathcal{S}\left((x,y),Z\right) = 1-\hat{f}(x)_y$ where $\hat f$ is the probability of class $y$ estimated by some classification function fitted using all of $Z$, again including $(x,y)$. A small value of this nonconformity score $V_{n+1}^{(x,y)}$ suggests that $(x,y)$ conforms well to $Z$, in the following sense: This observation with covariates $x$ has a high estimated probability of being in class $y$ based on trends in $Z$, and it does indeed belong to class $y$.

We will also use following common abbreviations: SRS (simple random sampling), WR (with replacement), WOR (without replacement), PPS (probability proportional to size sampling). $N$ denotes a population size and $n$ typically denotes a sample size, except for the ``split conformal'' methods (defined in Section~\ref{sec:SplitFull}) which split the data into a ``proper training set'' of size $m$ and a ``calibration set'' of size $n$.

\subsection{Intuition under exchangeable sampling}\label{sec:IntuitionIid}

\paragraph{Quantile lemma:} The lower bound in Lemma~1 has a simple rationale. First, let $\hat q_{n+1}$ be the $\lceil \beta (n+1) \rceil$ smallest value in $V_{1{:}{(n+1)}}$, and let $\hat q_{conf}$ be the $\lceil \beta (n+1) \rceil$ smallest value in $V_{1{:}n}\cup\infty$.
By exchangeability, the rank of $V_{n+1}$ among all the $V_i$ is uniformly distributed over $\{1,\ldots,n+1\}$, so $V_{n+1}$ is at or below $\hat q_{n+1}$ with probability exactly $\beta$.
Then since $\hat q_{conf} \geq \hat q_{n+1}$, $V_{n+1}$ is at or below $\hat q_{conf}$ with probability \emph{at least} $\beta$. Note that these probabilities are marginal (over exchangeable sampling of all the $V_1,\ldots,V_{n+1}$ together)---not conditional on the first $n$ observations nor conditional on the test observation.

This is an exact finite-sample result that only relies on exchangeability. By contrast, if we used instead $\hat q_n$, the $\lceil \beta n \rceil$ smallest value in $V_{1{:}n}$, then it may be an asymptotically good estimate of the population quantile, but it would require stronger conditions on the data distribution. Even then, the probability that $V_{n+1}$ is at or below $\hat q_n$ would be only \emph{approximately} $\beta$.

To see why $\hat q_{conf} \geq \hat q_{n+1}$, imagine an empirical cumulative distribution function (eCDF) plot of just the first $n$ datapoints, with step heights of $1/n$ at each observed value.
(See the top left subplot of Figure~\ref{fig:eCDFs}.)
If we add one more datapoint, the step heights of the eCDF will be $1/(n+1)$ instead, and the lower $\beta$ quantile will be $\hat q_{n+1}$ as above. It may be larger or smaller than the lower $\beta$ quantile of the first $n$ datapoints, depending on how large the last datapoint is. The lower $\beta$ quantile is largest if we choose a $(n+1)^{th}$ value that is larger than any of the first $n$ observed values, say at $\infty$. That is because if the added datapoint is larger than any of the others, it pushes the rest of the eCDF down, so the horizontal line with $y$-intercept of $\beta$ crosses the eCDF at an $x$ value further to the right. In this most extreme case, the lower $\beta$ quantile is $\hat q_{conf}$ as above, and therefore $\hat q_{conf} \geq \hat q_{n+1}$.
(See the bottom left subplot of Figure~\ref{fig:eCDFs}.)
By contrast, if instead of $\infty$ we had chosen to add a value smaller than some of the first $n$ observations, it would have pushed part of the eCDF upward and may have caused the lower $\beta$ quantile to become smaller than $\hat q_{n+1}$.

\begin{figure}[h!]
\includegraphics[width=0.49\textwidth]{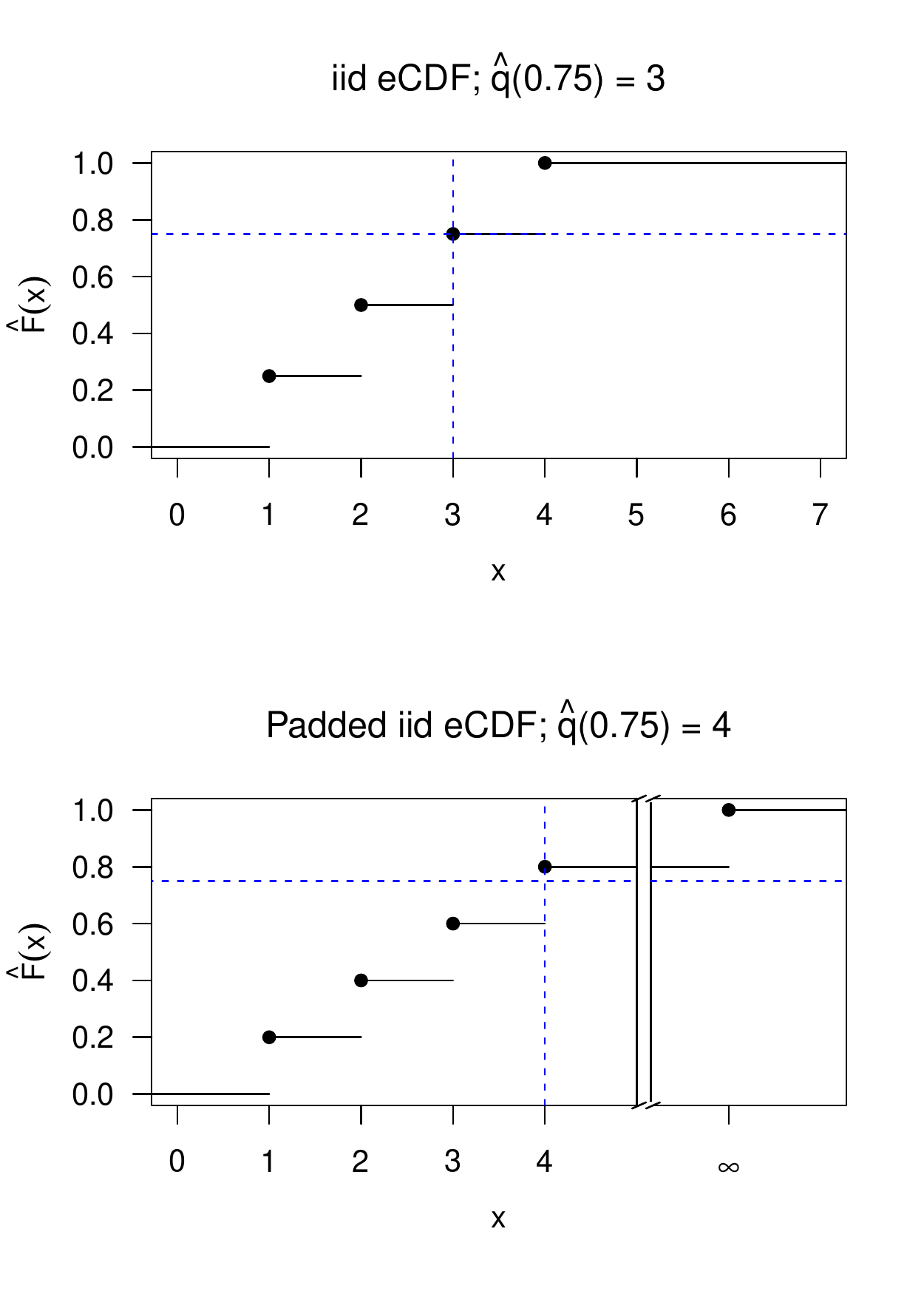}
\includegraphics[width=0.49\textwidth]{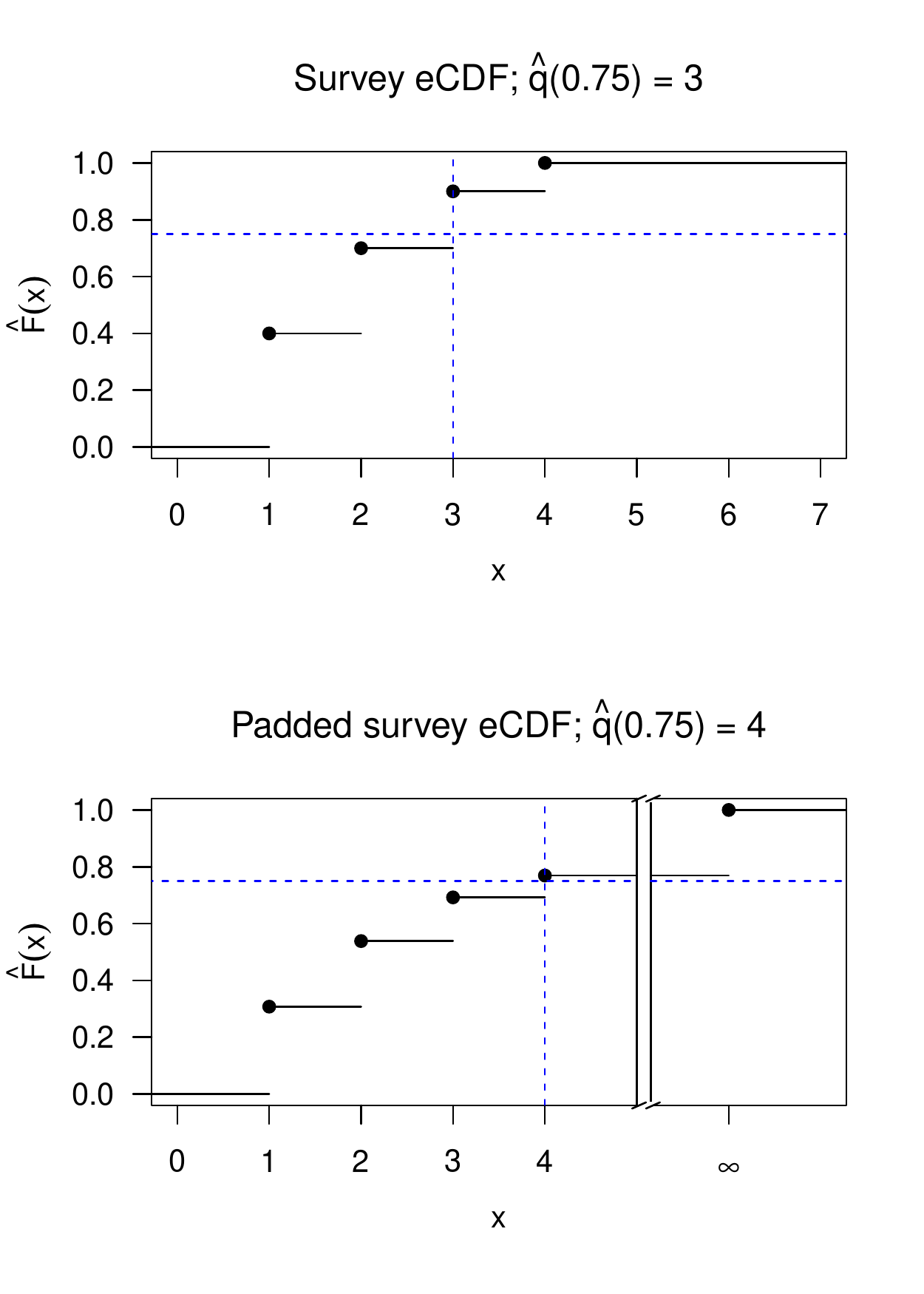}
\centering
\caption{Top left: eCDF for an iid sample with values \{1,2,3,4\}. At each value there is a vertical jump of $\frac{1}{n}$=0.25. The 75th percentile is at 3. \\ Bottom left: same eCDF but padded with an extra value at $\infty$. Now at each value there is a vertical jump of $\frac{1}{n+1}$=0.20. The 75th percentile is between 3 and 4, so we round up to 4. \\ Top right: eCDF for a survey sample with values \{1,2,3,4\} and corresponding survey weights \{4,3,2,1\}. At each value there is a vertical jump proportional to its survey weight. The 75th percentile is between 2 and 3, so we round up to 3. \\ Bottom right: same eCDF but padded with an extra value at $\infty$, corresponding to a not-yet-observed unit in the population that would have a survey weight of 3 if it were to be sampled. The vertical jumps at each value are still proportional to the survey weights, but now rescaled to make room for the weight of the extra value at $\infty$. The 75th percentile is between 3 and 4, so we round up to 4.}
\label{fig:eCDFs}
\end{figure}

Put another way, the probability that \{we take an exchangeable sample of size $n+1$ in which the last observation is greater than the $\lceil (1-\alpha)(n+1) \rceil$ smallest of the first $n$ observations\} is at most $\alpha$. This is a strict inequality, not approximate or asymptotic.

\paragraph{Regression prediction intervals:} Next, we can apply this quantile lemma in a regression setting. (In Section~\ref{sec:Class} we describe an approach for classification problems.) Imagine that we are given a fixed regression function $\hat f$ for making predictions of some real-valued random variable $Y$ from some covariate vectors $X$ for data from this population. (For the moment, assume $\hat f$ is not data-dependent; perhaps it was chosen a priori.) Also imagine that we have some residuals from applying $\hat f$ to a ``calibration set''\footnote{We recognize that the term ``calibration'' traditionally has a completely different meaning in survey sampling. In this paper, we will use ``calibration'' only in this conformal inference sense.} of $n$ different data points $(X_1,Y_1),\ldots,(X_n,Y_n)$.
Finally, we have one other covariate vector $X_{n+1}$, drawn exchangeably from the same population, but we do not know its $Y$ value. We want to get a level $(1-\alpha)$ prediction interval for $Y_{n+1}$ at this test-case $X_{n+1}$.

We can simply apply the ``quantile lemma'' logic to the $n$ absolute residuals from the calibration set. These absolute residuals are our nonconformity scores.
If we let $\hat q$ be the $\lceil(1-\alpha)(n+1)\rceil$ smallest absolute residual,
then $\hat C(X_{n+1}) = \hat f (X_{n+1}) \pm \hat q$ is a prediction interval for $Y_{n+1}$ with
guaranteed coverage of at least $1-\alpha$.

\paragraph{Marginal vs.\ conditional coverage:} This coverage is marginal across all samples of size $(n+1)$, with a size-$n$ calibration set plus one new test case. It is not conditional on the specific test case $X_{n+1}$ we chose; it assumes that both the calibration set and the test case together are exchangeable. Nonetheless, it is an exact finite-sample result. \cite{barber2021limits} show that although no conformal method could generally guarantee ``exact conditional coverage'' (conditioning on the exact value of $X$), certain relaxed versions of conditional coverage are achievable. \cite{angelopoulos2022gentle} review approaches to assessing and controlling various forms of conditional coverage, noting that marginal coverage alone may be insufficient, e.g.\ if it happens to be achieved by low coverage in a rare but important subpopulation and high coverage elsewhere. In Appendix~\ref{sec:Adaptive} we review links between conditional coverage and the related idea of ``adaptive'' prediction regions.

\subsection{Intuition under complex sampling}\label{sec:IntuitionWtd}

When the data are not exchangeable,
\cite{tibshirani2019conformal} have extended conformal prediction to a setting called ``covariate shift,'' in which the distribution of $X$ is differerent for the training and/or calibration sets than for the test set, but the conditional distribution of $Y|X$ remains the same.
They define a condition called ``weighted exchangeability'' and show how to make conformal inference work for such data. In the present paper, we will show that certain classic finite-population sampling designs can be treated as a special case of covariate shift, and therefore the results of \cite{tibshirani2019conformal} apply.

Specifically, imagine we sample $n$ cases with replacement from a finite population with known but unequal sampling probabilities; for instance, we may be using PPS sampling. Assume also that we take a SRS of just one test case from the full finite population. Now, we need a ``survey weighted quantile lemma'' that tells us how to find an adjusted quantile $\hat q$ of the complex sample, such that a test case nonconformity score is no larger than $\hat q$ with probability at least $1-\alpha$. We will plug the $n$ complex sample observations from our calibration set into a regression function $\hat f$ and find $\hat q$ for the absolute residuals. Then the probability of our test case $(X_{n+1},Y_{n+1})$ having a larger absolute residual is at most $\alpha$, and so $\hat f(X_{n+1}) \pm \hat q$ is a level $1-\alpha$ prediction interval for $Y_{n+1}$.

In Section~\ref{sec:Methods} we will show that in this unequal-probabilities setting, it is enough to mimic the exchangeable setting, except that instead of using an equal-weights eCDF to get the quantiles, we use a survey-weighted eCDF; see e.g.\ Section 5.11 of \cite{sarndal1992model}. The additional observation at $\infty$ will simply be assigned the known sampling weight that the test case $(X_{n+1},Y_{n+1})$ would have had, under the complex sampling design used to sample the first $n$ units.
See the right half of Figure~\ref{fig:eCDFs} for an illustration.

We acknowledge that there is a long-running debate in the survey sampling literature about whether and how to use sampling weights for model fitting and inference \citep{fienberg2010relevance, lumley2017fitting}.
Our goal in the present paper is not to take a stance in this debate, but simply to show how conformal inference can be applied \emph{if} the data analyst is taking a design-based perspective. In a model-based analysis, if the design features can justifiably be ignored, standard conformal inference methods may be used.



\subsection{Split vs.\ full conformal}\label{sec:SplitFull}

Above, for simplicity we have assumed that a $\hat f$ has been provided for us. More typically, we will need to fit $\hat f$ using data from the sample at hand. In the ``split conformal'' approach \citep{lei2018distribution}, also called ``inductive conformal'' \citep{papadopoulos2002inductive}, we start with an exchangeable dataset of $m+n$ observations, and we split them at random into a ``proper training set'' of size $m$ to be used for training $\hat f$, plus a calibration set of size $n$ as described above. In this situation, coverage is still marginal over the calibration set plus one test case, but now it is conditional on the proper training set.

An optimal sample splitting ratio $m/n$ for split conformal is not known. Larger $m/n$ should lead to better estimates of $\hat f$, and therefore shorter conformal prediction interval lengths on average. On the other hand, larger $m/n$ also leads to a smaller calibration set, and therefore more-variable conformal prediction interval lengths.

If we do not wish to lose statistical efficiency by splitting our data, we can use a more computationally-intensive ``full conformal'' approach \citep{vovk2022algorithmic}. In that approach, we no longer need a separate calibration set, and we let $n$ denote the total number of our complete-data training cases.
We also have one test case with only the covariates $X_{n+1}$ known. Then we repeat the following process for many $y$ values:

\begin{itemize}
  \item Choose a hypothetical response value $y \in \mathbb{R}$.
  \item Fit $\hat f_y$ to an augmented dataset in which we pretend this $y$-value is correct: \\ $(X_1,Y_1),\ldots,(X_n,Y_n),(X_{n+1},y)$. Find the $n+1$ nonconformity scores, e.g.\ the absolute residuals: $R_{y,i} = |Y_i-\hat f_y(X_i)|$ for $i=1,\ldots,n$ and  $R_{y,n+1}=|y-\hat f_y(X_{n+1})|$. Find their $1-\alpha$ quantile: $\hat q_y$ is the $\lceil (1-\alpha)(n+1) \rceil$ smallest value of $R_{y,i}$ for $i=1,\ldots,n+1$.
  \item If $R_{y,n+1} \leq \hat q_y$, we say $y$ ``conforms'' to the rest of the data, and we add $y$ to our prediction interval.
\end{itemize}

In terms of notation, refer back to \eqref{eqn:Tibs3}. Full conformal nonconformity scores are calculated using $Z = Z_{1{:}n} \cup \{(X_{n+1},y)\}$ and refitting a new prediction model $\hat f_y$ for each new $y$ or $X_{n+1}$. By contrast, split conformal nonconformity scores use a fixed $\hat f$ conditional on the proper training set. Each of the first $n$ nonconformity scores is calculated only using that data point $(X_i,Y_i)$ and the fixed $\hat f$, and there is no need to calculate a $(n+1)^{th}$ nonconformity score. Under the notation in \eqref{eqn:Tibs3}, we allow $Z$ to be ignored and $V_{n+1}^{(x,y)}$ to be undefined for split conformal.

Compared to split conformal, the full conformal procedure reduces the variance in the reported prediction interval endpoints. On the other hand, full conformal is far more computationally intensive than split conformal, as $\hat f_y$ has to be refit for every candidate $y$ at a given test point $X_{n+1}$, and the entire interval needs to be refit for every test point.

In empirical comparisons, \cite{lei2018distribution} found that split conformal and full conformal often produce very similar prediction intervals. They recommend using split conformal, on the grounds that it is faster to compute with little loss of efficiency, although full conformal avoids randomness in the data split.


\subsection{Classification problems and prediction sets}\label{sec:Class}

For brevity, we focus on regression prediction intervals, but classification prediction sets are another common use case.
For instance, consider split conformal inference for a multi-class probabilistic classifier. Examples range from simple logistic regression to deep neural networks. First, fit the classifier to the proper training set to get a function $\hat f(x)_y$, whose outputs are the estimated probabilitity of class $y$ for an input $x$. Then, using the calibration set, each calibration ``residual'' or nonconformity score $V_i$ is calculated as $1-\hat{f}(X_i)_{Y_i}$ for each calibration observation $i=1,\ldots,n$.
Next, find the corrected $1-\alpha$ quantile of these probabilities: let $\hat q$ be the $\lceil (1-\alpha)(n+1)\rceil$ smallest value of $V_i$ for $i=1,\ldots,n$.
Finally, for a new test case $X_{n+1}$, find its estimated probability for each class $y$, and choose all the classes whose nonconformity scores are below the corrected quantile: $\hat{C}(X_{n+1}) = \{y: 1-\hat f(X_{n+1})_y \leq \hat q \}$.
\cite{romano2020classification} and \cite{angelopoulos2022gentle} discuss refinements to this approach.

\section{Methods}\label{sec:Methods}

For the reader's convenience, we restate several of the key results from \cite{tibshirani2019conformal} below. Next we prove that we can apply these results to unequal-probability sampling with replacement. We follow by discussing methods for sampling without replacement, cluster sampling based on \cite{dunn2022distribution}, stratified sampling, and post-stratification.

Most of the methods below apply both to full and split conformal. However, for split conformal,
we recommend using the design-based approach of \cite{wieczorek2022kfold} to split survey data into a proper training set and a calibration set.
This way, both the proper training set and the calibration set will mimic the original sampling design, which ensures that the methods below are safe to apply to your calibration set. By contrast, a simple random split can cause the calibration set to have different properties than a clustered or stratified sampling design.

\subsection{Previous results for covariate shift  \citep{tibshirani2019conformal}}\label{sec:CovShift}

\cite{tibshirani2019conformal} define the ``covariate shift'' setting as follows:
\begin{align}\label{eqn:Tibs5}
(X_i, Y_i) &\stackrel{iid}{\sim} P = P_X \times P_{Y|X},\ i=1,\ldots,n, \nonumber \\
(X_{n+1}, Y_{n+1}) &\sim \tilde{P} = \tilde{P}_X \times P_{Y|X},\ \textrm{independently.}
\end{align}
Note that the conditional distribution of $Y|X$ remains the same as the marginal distribution of $X$ changes.

Assuming that $P_X$ and $\tilde{P}_X$ are known, we can define likelihood ratio weight functions $w = \mathrm{d}\tilde{P}_X / \mathrm{d}P_X$.
We use these to define a second set of weights:
\begin{equation}\label{eqn:Tibs6}
p_i^w(x) = \frac{w(X_i)}{\sum_{j=1}^n w(X_j) + w(x)},\ i=1,\ldots,n,
\quad \mathrm{and} \quad
p_{n+1}^w(x) = \frac{w(x)}{\sum_{j=1}^n w(X_j) + w(x)}.
\end{equation}

In this setting, we can state a weighted, nonexchangeable counterpart to Lemma~1.
Although \cite{tibshirani2019conformal} give and prove a more general version, here we only state a version tailored to the covariate shift setting.

\setcounter{lemma}{1}
\begin{lemma}\label{lem:weightedquantile}
Assume data from the model~\eqref{eqn:Tibs5}.
Assume $\tilde{P}_X$ is absolutely continuous with respect to ${P}_X$, and denote $w = \mathrm{d}\tilde{P}_X/\mathrm{d}P_X$.
For any $\beta\in(0,1)$,
\[\mathbb{P}\left\{ V_{n+1} \leq \mathrm{Quantile}\left(
\beta; \sum_{i=1}^n p_i^w(x) \delta_{V_i} + p_{n+1}^w(x) \delta_\infty
\right) \right\} \geq \beta,\]
where $V_i^{(x,y)}$, $i=1,\ldots,n+1$ are as defined in \eqref{eqn:Tibs3}, and $p_i^w$, $i=1,\ldots,n+1$ are as defined in \eqref{eqn:Tibs6}.
\end{lemma}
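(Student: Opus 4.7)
The plan is to mimic the structure of the unweighted quantile lemma (Lemma~1) with two adjustments: replace the uniform distribution over indices by one proportional to the likelihood-ratio weights $w(X_i)$, and replace the standard eCDF quantile by the weighted-eCDF quantile that appears in the statement. The key intermediate fact to prove is a \emph{weighted rank lemma}: conditional on the unordered multiset $E = \{Z_1,\ldots,Z_{n+1}\}$, the index $K$ of the observation drawn from $\tilde P_X \times P_{Y|X}$ satisfies $\mathbb{P}(K = k \mid E) = w(X_k)/\sum_{j=1}^{n+1} w(X_j)$.

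To establish that rank lemma, I would write the joint density of $(Z_1,\ldots,Z_{n+1})$ with respect to the product measure $P^{\otimes(n+1)}$: under model~\eqref{eqn:Tibs5} it equals $w(x_{n+1})$. If instead the ``shifted'' index were $k$ rather than $n{+}1$, the same dominating measure would yield density $w(x_k)$. Because the $(n+1)!$ orderings producing the multiset $E$ give rise to mutually absolutely continuous densities that differ only through the factor $w$ at the shifted index, Bayes' rule conditional on $E$ yields the claimed weights. Since the nonconformity scores $V_i^{(x,y)}$ are symmetric functions of $Z_{1{:}n}\cup\{(x,y)\}$ evaluated at the $i$-th point, the multiset $\{V_1,\ldots,V_{n+1}\}$ is a function of $E$, and the identity of $V_{n+1}$ within that multiset inherits the same $w$-weighted law.

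With that in hand, the proof proceeds by the same ``add mass at $\infty$'' trick as in Lemma~1. Conditional on $E$ and writing $\tilde p_k = w(X_k)/\sum_{j=1}^{n+1} w(X_j)$, the definition of the weighted quantile gives
\[
\mathbb{P}\!\left\{ V_{n+1} \leq \mathrm{Quantile}\!\left(\beta;\ \textstyle\sum_{k=1}^{n+1} \tilde p_k\, \delta_{V_k}\right) \,\middle|\, E\right\} \;\geq\; \beta.
\]
Replacing the unknown atom $V_{n+1}$ (of weight $\tilde p_{n+1}$) by a point mass at $\infty$ can only push any upper quantile to the right, and once we fix $x = X_{n+1}$ the normalizing sum $\sum_{j=1}^{n+1}w(X_j)$ coincides with $\sum_{j=1}^{n}w(X_j) + w(x)$, so the conformal weights in~\eqref{eqn:Tibs6} recover the statement's quantile. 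Marginalizing over $E$ by the tower property gives the unconditional inequality $\mathbb{P}\{V_{n+1}\leq \hat q\}\geq \beta$.

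I expect the main obstacle to be bookkeeping rather than a single hard step. One has to keep straight which quantities are deterministic functions of a fixed argument $x$ (as in the definition of $p_i^w(x)$) versus which are random through $X_{n+1}$, and in particular justify that ``plug $x=X_{n+1}$'' is a legal operation inside the probability. A secondary subtlety is the monotonicity claim that shifting an atom out to $+\infty$ never decreases an upper quantile of a weighted mixture; this is elementary but deserves one clean line. Once those are dispatched, the rest is the standard Tibshirani--Barber--Cand\`es--Ramdas argument, specialized to the covariate-shift weights.
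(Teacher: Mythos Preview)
Your proposal is correct and is precisely the argument underlying the result being cited; the paper's own proof is a one-line appeal to Lemma~3 of \cite{tibshirani2019conformal} in the covariate-shift setting, and what you have sketched is exactly that lemma's proof (the weighted-rank conditional law given the unordered multiset, followed by the $\delta_\infty$ monotonicity step and the tower property). So you have not taken a different route---you have simply unpacked the citation.
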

\begin{proof}
Apply Lemma~3 of \cite{tibshirani2019conformal} in the covariate shift setting of \eqref{eqn:Tibs5}.
\end{proof}

This weighted quantile lemma allows conformal inference in the covariate shift setting.

\begin{corollary1}[\cite{tibshirani2019conformal}]
Assume data from the model~\eqref{eqn:Tibs5}.
Assume $\tilde{P}_X$ is absolutely continuous with respect to ${P}_X$, and denote $w = \mathrm{d}\tilde{P}_X/\mathrm{d}P_X$.
For any score function $\mathcal{S}$, and any $\alpha\in(0,1)$, define for $x \in \mathbb{R}^d$,
\begin{equation*}\label{eqn:Tibs7}
\hat{C}_n(x) = \biggl\{ y\in\mathbb{R}:V_{n+1}^{(x,y)} \leq \mathrm{Quantile} \biggl( 1-\alpha; \sum_{i=1}^n p_i^w(x) \delta_{V_i^{(x,y)}} + p_{n+1}^w(x) \delta_\infty \biggr) \biggr\},
\end{equation*}
where $V_i^{(x,y)}$, $i=1,\ldots,n+1$ are as defined in \eqref{eqn:Tibs3}, and $p_i^w$, $i=1,\ldots,n+1$ are as defined in \eqref{eqn:Tibs6}.
Then $\hat{C}_n$ satisfies $\mathbb{P}\left\{ Y_{n+1} \in \hat{C}_n(X_{n+1})\right\} \geq 1-\alpha$.
\end{corollary1}

This is the weighted ``full conformal'' approach. For the ``split conformal'' approach, we restate part of Section~A.3 from the supplement to \cite{tibshirani2019conformal}. Let $(X_1^0,Y_1^0),\ldots,(X_{m}^0,Y_{m}^0)$ be a proper training set of size $m$, used for fitting the regression function $\hat f_0$. Also let $(X_1,Y_1),\ldots,(X_n,Y_n)$ be the calibration set of size $n$ and let $(X_{n+1},Y_{n+1})$ be the test case. Then weighted split conformal prediction is a special case of Corollary~1
in which $\hat f_0$ is treated as fixed, and e.g.\ if we use the absolute-residual score function, the prediction interval simplifies to
\begin{equation*}
\hat{C}_n(x) = \hat f_0(x) \pm \mathrm{Quantile} \biggl( 1-\alpha; \sum_{i=1}^n p_i^w(x) \delta_{|Y_i-\hat f_0(X_i)|} +
p_{n+1}^w(x) \delta_\infty \biggr),
\end{equation*}
with weights $p_i^w$ as defined in \eqref{eqn:Tibs6}. By Corollary~1, this has coverage at least $1-\alpha$ conditional on the proper training set $(X_1^0,Y_1^0),\ldots,(X_{m}^0,Y_{m}^0)$. Similar results apply for other score functions and for classification problems.

Finally, Remark 3 of \cite{tibshirani2019conformal} notes that the results above still hold if the likelihood ratio weights have an unknown normalization constant, i.e.\ if $w \propto \mathrm{d}\tilde{P}_X/\mathrm{d}P_X$, because this constant cancels out in the final weights in \eqref{eqn:Tibs6}.

\subsection{Unequal probability sampling with replacement}\label{sec:PPS}

Now, consider independently sampling WR from a finite population where each unit has unequal but known sampling probabilities, such as in PPS sampling. Assume no stratification, clustering, or other design constraints.
Our goal is to use this survey sampling design to get prediction intervals that cover $Y$ for most of the $N$ units in the finite population.

Below, we show that this is a special case of the covariate shift setting of \cite{tibshirani2019conformal}. Their distribution $P_X$ is replaced with the sampling probabilities for each unit in the finite population, and $\tilde{P}_X$ is replaced with a uniform distribution over the same population units. The only randomness is in the sampling; all covariates and response variables are fixed in the population, as is usual in design-based inference.

Let $j$ index the universe of population units from $1$ to $N$. Once a sample $S$ is selected, it can be written as a length-$n$ vector of sampled unit IDs, so that the elements of $S$ take values in $\{1,\ldots,N\}$. Often, survey researchers find it easiest to work with the random vector $S$ by rewriting it as a length-$N$ vector of sampling indicators $Z_j = \{1 \mathrm{\ if\ } j \in S \mathrm{,\ and\ } 0 \mathrm{\ otherwise}\}$. However, in the covariate shift setting, it will be easier to work with $S$ directly.
While using $j$ to index the population units, we use $i$ for indexing the elements of $S$. For example, $S_1=4$ would mean that the first unit we sampled was the fourth member of the population.

\begin{lemma}\label{lem:PPSWR}
Assume our training sample is a sample WR of size $n$ where each unit is drawn independently from a finite population of size $N$, with possibly-unequal but nonzero and known sampling probabilities $\pi_j$ for each population ID $j \in 1,\ldots,N$. Also assume our test case is a single observation sampled uniformly at random from the population. Let $S$ be a vector whose first $n$ elements are the population IDs of our $n$ training cases, and whose last element is the population ID for our test case.
\begin{itemize}
  \item Let our finite population consist of $N$ units, each with its own fixed data vector $(X_j,Y_j)$ and sampling probability $\pi_j$ for $j=1,\ldots,N$. The conditional distribution $P_{(X,Y)|S}$ is a deterministic lookup table: Once we have sampled a random $S_i$ for some $i\in \{1,\ldots,n+1\}$, we observe its corresponding covariate and response values $(X_{S_i},Y_{S_i})$. 
  \item Let the training distribution $P_S$ correspond to our complex survey design, which consists of $n$ iid draws from a categorical distribution\footnote{The sampling indicator vector $Z$ is a draw from a multinomial distribution (sampling WR) or a multivariate hypergeometric distribution (sampling WOR), where the $j^{th}$ element counts how often that unit was selected. Sampling from a categorical distribution is identical, but instead of recording counts, we record each of the selected unit IDs individually \citep{tu2014dirichlet}.} whose categories are simply the unit IDs $j \in \{1,\ldots,N\}$, with known probabilities $\pi_j$.
  \item Let the ``covariate-shifted'' test distribution $\tilde{P}_S$ consist of one draw from the same set of unit IDs as the training distribution, but with uniform probability: let $\tilde{\pi}_j = 1/N$ for all $j=1,\ldots,N$.
\end{itemize}
Then this is a special case of the covariate shift setting, with $S$ and $(X,Y)$ playing the roles that $X$ and $Y$ respectively played in \eqref{eqn:Tibs5}:
\begin{align}
(S_i, (X_{S_i}, Y_{S_i})) &\stackrel{iid}{\sim} P = P_S \times P_{(X,Y)|S},\ i=1,\ldots,n, \nonumber \\
(S_{n+1}, (X_{S_{n+1}}, Y_{S_{n+1}})) &\sim \tilde{P} = \tilde{P}_S \times P_{(X,Y)|S},\ \textrm{independently,}
\end{align}
and $w(S_i) \equiv 1/\pi_{S_i}$.
\end{lemma}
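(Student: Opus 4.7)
The plan is to verify that the three ingredients of the covariate shift model in \eqref{eqn:Tibs5} are satisfied under the substitutions in which $S$ plays the role of $X$ and $(X_S,Y_S)$ plays the role of $Y$, and then to read off the likelihood ratio. First I would check the training line. Under WR sampling with fixed probabilities $\pi_j$, the sampled IDs $S_1,\ldots,S_n$ are by construction iid draws from the categorical distribution $P_S$ on $\{1,\ldots,N\}$ with probabilities $\pi_j$. Given $S_i = j$, the pair $(X_j,Y_j)$ is a fixed population value, so $P_{(X,Y)\mid S}(\cdot \mid j) = \delta_{(X_j,Y_j)}$, which makes $(S_i,(X_{S_i},Y_{S_i}))$ an iid draw from the product $P_S \times P_{(X,Y)\mid S}$.

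Next I would check the test line. An SRS of size one from the population gives $S_{n+1}$ uniform on $\{1,\ldots,N\}$, which is exactly $\tilde{P}_S$ with $\tilde{\pi}_j = 1/N$. Crucially, the conditional lookup $(X_{S_{n+1}},Y_{S_{n+1}})$ given $S_{n+1}$ uses the very same deterministic population table as the training line, so $P_{(X,Y)\mid S}$ is unchanged between the two lines. This is the exact analogue, in our design-based translation, of the covariate-shift condition that ``the conditional distribution of $Y \mid X$ remains the same as the marginal distribution of $X$ changes.''

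Finally, the likelihood ratio is immediate. Both $P_S$ and $\tilde{P}_S$ are discrete measures on the finite set $\{1,\ldots,N\}$, so absolute continuity $\tilde{P}_S \ll P_S$ is automatic from the assumption that $\pi_j > 0$ for every $j$. Hence
\[
w(j) = \frac{\mathrm{d}\tilde{P}_S}{\mathrm{d}P_S}(j) = \frac{1/N}{\pi_j},
\]
and by Remark~3 of \cite{tibshirani2019conformal} the constant $1/N$ cancels in the normalized weights $p_i^w$ of \eqref{eqn:Tibs6}, so we may equivalently take $w(S_i) \equiv 1/\pi_{S_i}$ as claimed. There is no real technical obstacle here; the content of the lemma is conceptual rather than computational, namely the observation that design-based randomness through $S$, combined with a deterministic population lookup for $(X,Y)$, fits exactly into the iid covariate-shift template of \cite{tibshirani2019conformal}, so that Lemma~\ref{lem:weightedquantile} and Corollary~1 apply off the shelf to PPS-WR surveys with Horvitz--Thompson-style weights.
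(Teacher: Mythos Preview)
Your proof is correct and follows essentially the same approach as the paper: verify that WR sampling makes the training IDs iid from the categorical $P_S$, that the deterministic population lookup gives the same $P_{(X,Y)\mid S}$ for training and test, and then read off $w(j)=(1/N)/\pi_j\propto 1/\pi_j$ via Remark~3 of \cite{tibshirani2019conformal}. The only cosmetic difference is that you fold the absolute-continuity remark into this lemma, whereas the paper defers it to the proof of the ensuing corollary.
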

\begin{proof}
In the training data sampling design, units are sampled with replacement from a fixed population, and thus they are iid. Even though individual population units have different sampling probabilities, each unit in a training sample is drawn from the same categorical distribution. Also, because each $(X_j,Y_j)$ is paired with a fixed $j$ for all $j=1,\ldots,N$, the conditional distribution $P_{(X,Y)|S}$ is the same for training and test data.
Hence, these training and test distributions match the requirements in \eqref{eqn:Tibs5}.

Furthermore, since $\pi_j$ is nonzero for all population units, we have $w = \mathrm{d}\tilde{P}_S / \mathrm{d}P_S = (1/N) / \pi_{S_i} \propto 1/\pi_{S_i}$. By Remark~3 of \cite{tibshirani2019conformal}, it is safe to ignore the normalization constant and set $w(S_i) = 1/\pi_{S_i}$ directly.
\end{proof}

\setcounter{corollary}{1}
\begin{corollary}\label{cor:PPSWR}
Assume that $n$ training cases and one test case are drawn as described in Lemma~\ref{lem:PPSWR}. Then Lemma~2, the weighted full conformal results from Corollary~1, and the weighted split conformal results from Section~A.3 of \cite{tibshirani2019conformal} hold, with $p_i^w$ defined by using the inverse-probability sampling weights $1/\pi_{S_i}$ for the likelihood ratio weight function $w$ in \eqref{eqn:Tibs6}.
\end{corollary}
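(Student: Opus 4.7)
The plan is to invoke Lemma~\ref{lem:PPSWR} as the workhorse: it has already cast the PPS-WR design (together with a uniform-random test case) as a special case of the covariate shift model \eqref{eqn:Tibs5}, with $S_i$ playing the role of $X_i$, the deterministically paired $(X_{S_i},Y_{S_i})$ playing the role of $Y_i$, and the likelihood ratio weight $w(S_i)=1/\pi_{S_i}$. Once that identification is in place, Corollary~\ref{cor:PPSWR} should follow essentially by direct substitution into Lemma~\ref{lem:weightedquantile}, Corollary~1, and the split-conformal display in Section~A.3 of \cite{tibshirani2019conformal}; no new probabilistic argument is needed.

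To make the substitution rigorous, I would carry out the following steps in order. First, verify the absolute continuity hypothesis: since every $\pi_j$ is strictly positive, the uniform test distribution $\tilde{P}_S$ is absolutely continuous with respect to the categorical training distribution $P_S$, with Radon--Nikodym derivative $(1/N)/\pi_{S_i}$; by Remark~3 of \cite{tibshirani2019conformal} the $1/N$ constant can be dropped because it cancels in \eqref{eqn:Tibs6}, justifying our use of $w(S_i)=1/\pi_{S_i}$. Second, confirm that the nonconformity-score construction in \eqref{eqn:Tibs3} is well-defined in this setting: any score function $\mathcal{S}$ that the analyst wishes to use on pairs $((x,y),Z)$ can be composed with the deterministic lookup $S_i \mapsto (X_{S_i},Y_{S_i})$ to produce a score function on the surrogate covariate $S_i$, and this composed score still treats the calibration points as an unordered multiset, as required. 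Third, plug these $V_i^{(x,y)}$ and $p_i^w$ into the statements of Lemma~\ref{lem:weightedquantile}, Corollary~1, and the split-conformal display. The conclusions then read back unchanged, with the probability interpreted as being over the random sampling of $S$ under the complex design (plus the uniform test draw) while the finite-population data values are held fixed, which is exactly the design-based viewpoint.

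The main obstacle I anticipate is technical bookkeeping rather than substance. The results of \cite{tibshirani2019conformal} are written as if $X$ takes values in $\mathbb{R}^d$, whereas our surrogate ``covariate'' $S$ is a discrete index on $\{1,\ldots,N\}$; I would need to observe that the proof of Lemma~3 of \cite{tibshirani2019conformal} relies only on weighted exchangeability plus measurability and so goes through on any measurable space, discrete index sets included. A secondary subtlety is that the ``point $(x,y)$'' in \eqref{eqn:Tibs3} formally becomes the triple $(S_i,X_{S_i},Y_{S_i})$, but since $(X_{S_i},Y_{S_i})$ is a deterministic function of $S_i$, any nonconformity score the analyst computes from $(X_{S_i},Y_{S_i})$ descends to a valid score in the surrogate covariate-shift model, and the resulting prediction region expressed in terms of $Y_{S_{n+1}}$ inherits the $1-\alpha$ marginal coverage guarantee across the finite population.
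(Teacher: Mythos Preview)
Your proposal is correct and follows essentially the same approach as the paper: verify absolute continuity of $\tilde{P}_S$ with respect to $P_S$ (the paper does this by noting both are discrete with identical support, since all $\pi_j>0$), then observe that all remaining hypotheses of Corollary~1 and Section~A.3 of \cite{tibshirani2019conformal} are already established by Lemma~\ref{lem:PPSWR}. Your additional bookkeeping about the discrete index space and the score-function composition is more thorough than the paper's two-sentence proof, but it is not a different route.
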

\begin{proof}
In this setting, ${P}_S$ and $\tilde{P}_S$ are discrete distributions with identical support, so $\tilde{P}_S$ is absolutely continuous with respect to ${P}_S$. All other conditions of Corollary~1 and Section~A.3 of \cite{tibshirani2019conformal} are met by Lemma~\ref{lem:PPSWR}.
\end{proof}

As suggested by the intuition in Section~\ref{sec:IntuitionWtd}, we carry out conformal inference by replacing the exchangeable eCDF with a survey-weighted eCDF in which the first $n$ observations (the training sample) have their usual inverse-probability sampling weights. The $(n+1)^{th}$ sample (the test case) is assigned the known sampling weight $1/\pi_{S_{n+1}}$ that population unit $S_{n+1}$ would have had under the original sampling design.
The assumption of uniform sampling of test cases simply lets us guarantee marginal coverage across the entire finite population. We do need to know what $\pi_j$ would have been for every unit in the population, which may be reasonable for the organization carrying out the survey but not for end users of the data; Section~\ref{sec:Practical} discusses ways to address this issue.

In a situation with no covariates, we can replace $X$ with a constant in Lemma 3, and apply Lemma 2 to get a prediction interval for $Y$ that is not conditional on $X$, although it may depend on $\pi_{S_{n+1}}$.



\subsection{Unequal probability sampling without replacement}

SRSWOR is exchangeable, so the usual conformal methods apply directly.
But other kinds of sampling WOR are not a special case of the covariate shift setting above, because the training data are no longer independent. While \cite{tibshirani2019conformal} do provide more general results under a relaxed condition they call ``weighted exchangeability,'' it is not immediately clear that this condition can account for sampling WOR.

However, if $n \ll N$, then statistical properties derived under sampling WR are often fairly good approximations to the actual properties under assuming sampling WOR. Our simulations in Section~\ref{sec:Sims}, in which we sample WOR, suggest that this is likely to hold true for conformal methods as well.

\subsection{Cluster sampling}\label{sec:Cluster}

We cannot apply the covariate shift results to cluster sampling, because the data are not independent. Cluster sampling with unequal probabilities will require further research.

However, in the special case where the clusters themselves are sampled by SRS and the ultimate units are sampled by SRS within each cluster, then we can apply the methods of \cite{dunn2022distribution}. Their paper is framed in terms of a more general two-layer hierarchical setting. They do not explicitly consider a finite-population setting, but their assumptions do allow for it (except for some restrictions in their CDF pooling method).

In the framework of \cite{dunn2022distribution}, let $P_1,\ldots,P_k \sim \Pi$ be $k$ random distributions drawn iid from $\Pi$. From each of the sampled distributions $P_\ell$ for $\ell=1,\ldots,k$, we draw $n_\ell$ iid observations $(X_{\ell 1}, Y_{\ell 1}),\ldots,(X_{\ell n_\ell}, Y_{\ell n_\ell})$.

The corresponding setup in survey sampling would be cluster sampling, where our finite population of size $N$ is partitioned into a fixed number $K$ of clusters or Primary Sampling Units (PSUs), and we take a sample of these clusters. SRSWR from a finite population is a special case of iid sampling. Hence, if we take a SRSWR of $k<K$ clusters $P_1,\ldots,P_k$ from the finite population, and then take a SRSWR of $n_\ell$ ultimate units from cluster $\ell$ for each $\ell=1,\ldots,k$, then this is a special case of the setup above, and we can apply most of the results in \cite{dunn2022distribution}. We outline one of their approaches briefly here, and the others in Appendix~\ref{sec:Dunn}, but refer readers to their full paper for details.

Although \cite{dunn2022distribution} state their results in terms of iid sampling, it seems likely that they could be relaxed to exchangeable sampling of the distributions $P_\ell$ as well as exchangeable sampling within each $P_\ell$. If so, these results would also apply to SRSWOR, not just SRSWR. Similarly, we conjecture that some of their results could be extended to the ``weighted exchangeable'' setting of \cite{tibshirani2019conformal}.

\paragraph{Subsampling:} By subsampling, \cite{dunn2022distribution} change the sampling design to become exchangeable. Start with the design above, but then subsample our dataset by choosing one unit at random from each cluster. Then any test case from any new cluster is exchangeable with our subsample. We can treat the $k$ subsampled training cases and the one test case as being generated exchangeably by the process: ``Take a cluster at random, then take one observation at random from that cluster,'' and it is valid to use standard conformal methods.

However, although this process guarantees exact marginal coverage $1-\alpha$ across training sets, it ignores most of the data and leads to wider variability in achieved coverage from training set to training set. An alternative is to carry out repeated subsampling $B$ times and combine the results appropriately across subsamples. \cite{dunn2022distribution} show how to do this in a way that is guaranteed to have coverage of $1-2\alpha$, but in practice tends to achieve coverage close to $1-\alpha$.

\subsection{Stratified sampling}\label{sec:Strata}

Again, we cannot apply the covariate shift results to stratified sampling, because even though strata are independent of each other, the $n$ samples are not independent.

However, we can safely apply the conformal methods from previous subsections within each stratum separately, if within each stratum independently we have used one of the sampling methods with conformal guarantees.
In other words, if the full population is partitioned into $H$ strata, we can treat each stratum $h=1,\ldots,H$ as its own population. To form a prediction interval for a test case from stratum $h$, we apply conformal methods to only the $n_h$ training cases from that stratum. Clearly this will guarantee conditional coverage by stratum. If the same coverage level is used simultaneously across all strata, it will also guarantee marginal coverage. This is an example of ``group-balanced conformal prediction'' or ``object-conditional Mondrian conformal prediction'' \citep{vovk2013conditional, vovk2022algorithmic, angelopoulos2022gentle}.

This stratum-by-stratum approach will lead to a loss of statistical efficiency, since each stratum's conformal quantiles will be estimated using a sample size $n_h<n$. On the other hand,
in some situations, prediction intervals may be more useful if we allow their sizes to vary by stratum than if their size has to be constant across strata. Further, guaranteeing coverage conditional on stratum may be more desirable than only guaranteeing marginal coverage, which could be achieved by overcoverage in some strata at the expense of undercoverage in others.


\subsection{Post-stratification}\label{sec:Poststrat}

Imagine our first $n$ observations were drawn SRSWR, but we wish to post-stratify after data collection, and the population size $N_h$ of each post-stratum is known. We could reweight each sampled observation by the relative stratum sizes: unit-level post-stratification weights are proportional to $N_h/n_h$, where $n_h$ is random rather than fixed in advance \citep{lohr2021sampling}.

We cannot use such weights for conformal inference and retain our exact finite-sample guarantees under the justifications in the present paper, because it would induce dependence between the training set and test case.
However, \cite{fannjiang2022conformal} extend conformal methods to allow for ``feedback covariate shift,'' where the test distribution is allowed to depend on the observed training data, and this may be a promising direction for future work on post-stratified conformal prediction.

In the meantime, we can treat post-stratification as an approximation to estimating the covariate-shift likelihood ratio weights. Although we lose the exact conformal guarantees, using such an approximation would be just as reasonable as the estimation of covariate-shift weights in general. Specifically, imagine we are sampling SRSWR from two different finite populations: a training population of size $M$, and a test population of size $N$. Both populations have the same $P_{(X,Y)|S}$ and the same set of post-strata $1,\ldots,H$, but different (and known) post-stratum sizes $M_h,N_h$ for $h=1,\ldots,H$. In both cases, our sampling design is equivalent to first choosing a post-stratum at random with probability proportional to post-stratum size, then a unit from within that post-stratum at random.

Now, we apply Lemma~\ref{lem:PPSWR}---except that we let $P_S$ and $\tilde{P}_S$ depend on the post-stratum ID $h \in 1,\ldots,H$, not the population unit ID. Then $dP_S/d\tilde{P}_S = (M_h/M)/(N_h/N)$, so $w(S_i) \propto N_{h_i}/M_{h_i}$. So far, we have exact guarantees. If we now assume that we do not actually know the true post-stratum sizes for the training population, we can replace $M_h/M$ with training-sample estimates $n_h/n$ and get post-stratification weights $w(S_i) \propto N_{h_i}/n_{h_i}$. If we further assume that both populations are actually the same, we are now justified in using conformal methods with the standard post-stratification weights. Our guarantees are approximate only because we estimated the ``training'' post-stratum sizes.

Similar types of weighting could also be developed in order to apply conformal inference when we do not we wish to assume that test cases will be sampled uniformly but with some other sampling design. For instance, instead of guaranteeing prediction interval coverage across people, perhaps we want to guarantee coverage across visits to the doctor, and we use a sampling distribution to encode our knowledge of different people's propensities to visit the doctor.

\section{Examples}\label{sec:Examples}

Our R code and knitted RMarkdown output are available at \\ \url{https://github.com/ColbyStatSvyRsch/surveyConformal-paper-code} .

\subsection{Real data}

%
%
%

We have claimed that conformal methods may work better when they account for the sampling design of the data. As a simple demonstration, we turn to an extract of the Medical Expenditure Panel Survey or MEPS \citep{ahrq2017meps}, which is a nationally representative survey about the cost and use of health care among the U.S.\ civilian noninstitutionalized population.

We chose the MEPS because it has already been used as a benchmark dataset in several conformal inference papers, starting with \cite{romano2019conformalized} and followed by others \citep{sesia2020comparison, sesia2021conformal, feldman2021improving, bai2022efficient}. In each of these papers, the authors randomly partition MEPS data into proper training, calibration, and test sets, then report the coverage and length of conformal prediction intervals (PIs) for various models across many such random partitions.
However, none of these papers report accounting for the complex sampling design of MEPS, which includes stratification, clustering, and oversampling of selected subgroups.

At present, we do not attempt a full correction of these earlier analyses of MEPS. We only wish to illustrate that there can be noticeable differences in the conformal PIs depending on whether or not we account for the sampling design, even in a very simple analysis. We use a portion of the public-use dataset for calendar year 2015. We subset to only those respondents who filled out the self-administered questionnaire (SAQ) portion of the survey, and we use the person-level weight variable designed to be used with the SAQ for persons age 18 and older during the interview.

In the poster associated with \cite{romano2019conformalized}, available at
\url{https://github.com/yromano/cqr/blob/master/poster/CQR_Poster.pdf},
the authors explain that they are predicting ``health care utilization, reflecting \# visits to doctor's office / hospital.''
Following their GitHub code, we define a ``utilization'' response variable as the sum of five counts for 2015:
total number of office-based visits reported;
total number of reported visits to hospital outpatient departments;
count of all emergency room visits reported;
total number of nights associated with hospital discharges; and
total number of days where home health care was received from any type of paid or unpaid caregiver.

Unlike the earlier conformal analyses of MEPS, we do take into account the public-use variables for strata, PSUs, and person-level weights. In the 2015 SAQ-eligible subset that we work with, there are 165 strata, and most have 2 or 3 PSUs. First we drop the 4 strata which had no observations in either PSU 1 or PSU 2. Next, for simplicity, we set aside every observation whose PSU is labeled 3 (regardless of stratum) and treat them as our test set. We treat the rest (PSUs 1 and 2) as our overall training set. We split this training data into proper-training and calibration sets under two different approaches. The first approach is a 50/50 SRS split that ignores the survey design. The second approach is to form a random split by PSU within each stratum, so that in each stratum independently, we randomly assign either PSU 1 to proper-training and PSU 2 to calibration or vice versa. These approaches result in proper-training and calibration sets with around 10,000 people each and a test set with 1659 people.

For each split, we fit a linear regression model to the proper-training set to predict utilization from a subset of the covariates used by \cite{romano2019conformalized}: age; sex; indicators for diabetes diagnosis, private insurance coverage, and public insurance coverage; and quantitative summaries of answers to the Physical Component Summary (PCS), the Mental Component Summary (MCS), and the Kessler Index (K6) of non-specific psychological distress. Higher PCS and MCS scores represent better health, while lower K6 scores represent less distress. We calculate PIs for each test case by combining it with the calibration set and finding conformal quantiles.
Developing conformal methods for designs with only one or two PSUs per stratum is still an open problem, not yet addressed by
the methods of Sections~\ref{sec:Cluster} and \ref{sec:Strata}.
Collapsing strata into pseudo-strata could be a reasonable solution if we had subject matter knowledge of the strata, but the public-use MEPS data uses anonymized stratum IDs.
For this reason, our quantiles do not use these methods
to handle the clustering and stratification, but they do apply the survey weights as in Section~\ref{sec:PPS}.


Because we ended up with such large proper-training and calibration sets, relative to the smaller test set, we only saw small differences between the exchangeable and design-based conformal approaches at moderate PI levels. On the other hand, we have enough proper-training and calibration data to estimate 99\% PI levels too, and there we do see substantial differences in the average PI length.
Our main takeaways, based on Tables~\ref{table:meps_covg} and \ref{table:meps_len}:

\begin{table}[t!]
\centering
\begin{tabular}{rllllll}
  \hline
  & Split/fit/conformal & Test set & 80\% PI covg & 90\% PI covg  & 95\% PI covg & 99\% PI covg\\
  \hline
  & SRS & SRS & (0.824, 0.826) & (0.910, 0.911) & (0.951, 0.951) & (0.993, 0.994) \\
  & SRS & Svy-wtd & (0.836, 0.838) & (0.914, 0.915) & (0.950, 0.951) & (0.995, 0.995) \\
  & Svy-wtd & Svy-wtd & (0.829, 0.831) & (0.911, 0.913) & (0.949, 0.950) & (0.992, 0.993) \\
  \hline
\end{tabular}
\caption{Linear regression models' PI coverage, estimated on the MEPS dataset. Coverages reported as approximate 95\% CIs for the average, based on 100 random proper-training/calibration splits at each setting, using the same test set each time. When data splits, proper-training-set model fits, and calibration-set conformal quantiles ignored the survey design, we over-covered (especially for lower PI levels); but when test-set estimates of coverage also ignored the sampling design, they \emph{underestimated} the amount of overcoverage, compared to test-set estimates that did account for the sampling design. However, when splits, fits, and conformal quantiles accounted for the survey design, there was slightly less over-coverage.}
\label{table:meps_covg}
\end{table}

\begin{table}[t!]
\centering
\begin{tabular}{rllllll}
  \hline
  & Split/fit/conformal & Test set & 80\% PI length & 90\% PI length  & 95\% PI length & 99\% PI length\\
  \hline
  & SRS & Either & (28.9, 29.2) & (43.9, 44.3) & (60.2, 60.7) & (250.5, 256.6) \\
  & Svy-wtd & Svy-wtd & (27.3, 27.7) & (40.9, 41.4) & (57.4, 58.0) & (202.0, 211.1) \\
  \hline
\end{tabular}
\caption{Linear regression models' PI lengths, estimated on the MEPS dataset. PI lengths reported as approximate 95\% CIs for the average, based on 100 random proper-training/calibration splits at each setting, using the same test set each time. When data splits, proper-training-set model fits, and calibration-set conformal quantiles ignored the survey design, our PI lengths tended to be slightly larger than when splits, fits, and conformal quantiles did account for the survey design---or much larger when the PI level is very high. In the Table's first row, it does not matter whether or not test-set estimates were survey-weighted, because these PI lengths are constant across test-set cases for a given data split and PI level.}
\label{table:meps_len}
\end{table}

\begin{enumerate}
  \item When we used a conformal pipeline that assumed exchangeability (data splits at random; no weights in model-fitting; no weights in the conformal quantiles on the calibration set), we tended to over-cover. If we also ignored the weights when using the test set to estimate coverage, these calculations under-estimated just how much over-coverage there was. By taking survey-weighted means on the test set, we found slightly higher estimates of coverage. We believe these higher estimates are more appropriate, since the survey-weighted means ought to generalize to the rest of the population better than un-weighted means do.
  \item When we did use design-based methods (design-based splits; design-based and survey-weighted model fits; and survey-weighted conformal quantiles on the calibration set), this reduced our over-coverage a little, according to the survey-weighted means of coverage on the test set. Similarly, it also made our PIs a little narrower (around one to three fewer utilizations/year) for moderate PI levels, and substantially narrower (around forty fewer utilizations/year) for 99\% PIs.
\end{enumerate}

This brief MEPS example demonstrates that the estimated PI coverages and lengths can differ when conformal methods account for the survey design. In the next subsection, we study these effects in more detail, by repeatedly sampling under known sampling designs from a complete finite population. We also use smaller sample sizes, to see more pronounced differences between using vs.\ ignoring the survey design.

\subsection{Simulations}\label{sec:Sims}

For our design-based simulations, we used the Academic Performance Index (API) data \citep{cdoe2018academic} from \texttt{R}'s \texttt{survey} package \citep{lumley2021survey}. The \texttt{apipop} dataset contains information on 37 variables for all 6194 California schools (elementary, middle, or high school) with at least 100 students. The dataset vintage is not documented, but appears to be the 1999-2000 academic year, since the data includes API scores for each school for 1999 and 2000.

We used the \texttt{apipop} dataset as the finite population, and repeatedly took samples (with around $n=200$ ultimate sampling units) using various designs. When we evaluated our results on test sets, we used the entire finite population---including those cases that had already been used to fit models or find conformal quantiles---because this corresponds to the guarantees that our paper makes in Section~\ref{sec:Methods}.
Simulation details:
\begin{itemize}
  \item In all simulations, sampling was done without replacement. Although our results in Section~\ref{sec:Methods} assume sampling with replacement, we conjectured that sampling without replacement would still lead to conformal coverage close to nominal, and we wanted to check this empirically.
  \item All simulations were run 1000 times. All 95\% confidence intervals are calculated as the estimate $\pm$ 2 times the SD over $\sqrt{1000}$.
  \item After dropping the rows with missing values for \texttt{enroll} and \texttt{mobility}, the full ``finite population'' consisted of the 6153 schools without missing values for any variables used in the simulations.
  \item Most simulations used a sample size of $n=200$ schools. However, the cluster samples had $n \approx 200$ schools on average but varied across samples. The regression model simulations took PPS samples of size $m+n=400$, then split the samples at random into a proper training set of $m=200$ and a calibration set of $n=200$.
  \item The response variable was usually \texttt{api00}, the school's API in 2000. The exception is Table~\ref{table:PPS-enroll}, where the response variable was \texttt{enroll}, the same variable used to create the PPS weights. For the non-regression simulations, we simply sought ``unsupervised'' prediction intervals for the marginal distribution of the response variable (with no covariates). For the regression simulations, we found quantiles of $|y - \hat f(x)|$ on the calibration set and sought prediction intervals for the response variable at the covariate values for each unit in the population.
  \item PPS sampling probabilities (if used) were usually proportional to \texttt{enroll}, the number of students enrolled at the school. The exception is parts of Tables~\ref{table:model_covg} and \ref{table:model_len}, where PPS probabilities were proportional to 1 plus the square root of the residuals from the full-population linear regression model, in order to see the effects of over-sampling cases that are hard to fit well. Conformal quantiles for the PPS simulations were calculated as in Section~\ref{sec:PPS}.
  \item Clusters (if used) were based on \texttt{dnum}, the school district number. Cluster samples always took a SRS of 24 school districts. 24 was chosen because it led to an average of $n=198$ schools (close to the $n=200$ used in other sampling designs). For the ``survey-design-aware'' results in Table~\ref{table:clus}, we calculated quantiles using the ``subsampling once'' method as in Section~\ref{sec:Cluster}, while the design-unaware results ignored clustering and calculated quantiles on the whole dataset.
  \item Strata (if used) were based on \texttt{stype}, the school type. Stratified samples always took 100 elementary, 50 middle, and 50 high schools, with an SRS within each school type. For the ``survey-design-aware'' results in Table~\ref{table:strat}, we calculated quantiles separately by stratum as in Section~\ref{sec:Strata}, while the design-unaware results ignored strata and calculated quantiles on the whole dataset.
  \item Linear regression models always predicted \texttt{api00} using a linear combination of \texttt{ell} (the percentage of English language learners), \texttt{meals} (the percentage of students eligible for subsidized meals), and \texttt{mobility} (the percentage of students for whom this is the first year at the school).
\end{itemize}

Our main takeaways:

\begin{enumerate}
  \item Across many settings, using the naive quantile (the $\lceil n\alpha\rceil$ order statistic) instead of the conformal quantile (the $\lceil (n+1)\alpha\rceil$ order statistic) tended to give slight undercoverage. The conformal quantile helped partly to fix this; but in non-SRS settings it was not enough of a fix on its own.
  \item For SRS designs, the conformal quantile lemma worked as advertised. See Table~\ref{table:SRS}.
  \item For PPS designs, ignoring the weights gave slight undercoverage when weights were not highly informative about the response variable. On the other hand, ignoring the weights led to extreme \emph{over}-coverage when weights were highly informative. In both cases, weighted conformal quantiles fixed the problem. See Tables~\ref{table:PPS-api00} and \ref{table:PPS-enroll}.
  \item For clustered designs, as well as for stratified designs, ignoring the design undercovered but accounting for the design (including conformal-quantile padding) did fix it. The one exception was for one of the cluster-design simulations, where the design-based conformal PIs did not reach the target coverage. This might have been due to the small number of clusters, large variation in cluster sizes, and our choice of ``subsampling once'' as the conformal method. See Tables~\ref{table:clus} and \ref{table:strat}.
  \item For simple models and split-conformal inference,
if the weights were not highly informative about model variables or the fit of the model,
then it did not make much difference whether or not the weights were used for quantiles.
But when the weights were informative,
we saw that unweighted conformal quantiles \emph{over}-covered (and PIs were too wide).
Using survey-weights in model-fitting was not enough to fix it,
but weighting the quantiles was. See Tables~\ref{table:model_covg} and \ref{table:model_len}.
\end{enumerate}


\begin{table}[ht!]
\centering
\begin{tabular}{rlll}
  \hline
 & Conformal? & 80\% PI coverage & 90\% PI coverage \\
  \hline
  & no & (0.794, 0.801) & (0.945, 0.949) \\
  & yes & (0.800, 0.807) & (0.949, 0.953) \\
   \hline
\end{tabular}
\caption{SRS. Average PI coverage of \texttt{api00}, at two different PI levels, under 1000 SRS samples of $n=200$ each from API dataset. Coverages reported as 95\% confidence intervals. Naive quantiles undercover, but conformal quantiles achieve target coverage.}
\label{table:SRS}
\end{table}

\begin{table}[ht!]
\centering
\begin{tabular}{rllll}
  \hline
 & Survey-weighted? & Conformal? & 80\% PI coverage & 90\% PI coverage \\
  \hline
  & no & no & (0.752, 0.760) & (0.927, 0.932) \\
  & no & yes & (0.759, 0.767) & (0.935, 0.940) \\
  & yes & no & (0.795, 0.803) & (0.944, 0.949) \\
  & yes & yes & (0.803, 0.811) & (0.953, 0.958) \\
   \hline
\end{tabular}
\caption{Uninformative PPS. Average PI coverage of \texttt{api00}, at two different PI levels, under 1000 PPS samples of $n=200$ each from API dataset where probability $\propto$ \texttt{enroll}. Coverages reported as 95\% confidence intervals. Naive quantiles \emph{under}-cover; conformal quantiles alone or survey-weighting alone do not fix it; but survey-weighted conformal quantiles achieve target coverage.}
\label{table:PPS-api00}
\end{table}

\begin{table}[ht!]
\centering
\begin{tabular}{rllll}
  \hline
 & Survey-weighted? & Conformal? & 80\% PI coverage & 90\% PI coverage \\
  \hline
  & no & no & (0.933, 0.935) & (0.986, 0.987) \\
  & no & yes & (0.934, 0.937) & (0.988, 0.989) \\
  & yes & no & (0.792, 0.798) & (0.946, 0.949) \\
  & yes & yes & (0.796, 0.802) & (0.948, 0.951) \\
   \hline
\end{tabular}
\caption{Informative PPS. Average PI coverage of \texttt{enroll}, at two different PI levels, under 1000 PPS samples of $n=200$ each from API dataset where probability $\propto$ \texttt{enroll}. Coverages reported as 95\% confidence intervals. Naive quantiles \emph{over}-cover; conformal quantiles alone do not fix it, while survey-weighting alone \emph{under}-covers; but survey-weighted conformal quantiles achieve target coverage.}
\label{table:PPS-enroll}
\end{table}

\begin{table}[ht!]
\centering
\begin{tabular}{rllll}
  \hline
 & Survey-design? & Conformal? & 80\% PI coverage & 90\% PI coverage \\
  \hline
  & no & no & (0.785, 0.805) & (0.934, 0.943) \\
  & no & yes & (0.791, 0.811) & (0.940, 0.949) \\
  & yes & no & (0.799, 0.817) & (0.916, 0.930) \\
  & yes & yes & (0.799, 0.817) & (0.959, 0.968) \\
   \hline
\end{tabular}
\caption{Clustering. Average PI coverage of \texttt{api00}, at two different PI levels, under 1000 clustered samples of 24 clusters ($n\approx 200$) each from API dataset. Coverages reported as 95\% confidence intervals. Due to high variability in cluster sizes, these 95\% CIs are wider than in previous tables, but overall trend is generally similar to other tables: Naive quantiles appear likely to \emph{under}-cover; conformal quantiles alone or survey-design-aware analyses alone do not necessarily fix it; but survey-design-aware conformal quantiles achieve target coverage.}
\label{table:clus}
\end{table}

\begin{table}[ht!]
\centering
\begin{tabular}{rllll}
  \hline
 & Survey-design? & Conformal? & 80\% PI coverage & 90\% PI coverage \\
  \hline
  & no & no & (0.769, 0.777) & (0.933, 0.938) \\
  & no & yes & (0.775, 0.783) & (0.940, 0.944) \\
  & yes & no & (0.787, 0.795) & (0.939, 0.944) \\
  & yes & yes & (0.800, 0.808) & (0.953, 0.956) \\
   \hline
\end{tabular}
\caption{Stratification. Average PI coverage of \texttt{api00}, at two different PI levels, under 1000 stratified samples of $n=200$ each from API dataset. Coverages reported as 95\% confidence intervals. Naive quantiles \emph{under}-cover; conformal quantiles alone or survey-design-aware analyses alone do not fix it; but survey-design-aware conformal quantiles achieve target coverage.}
\label{table:strat}
\end{table}

\begin{table}[ht!]
\centering
\begin{tabular}{rlllll}
  \hline
  & PPS probs & Svy-wtd conformal? & Svy-wtd regression? & 80\% PI covg & 90\% PI covg \\
  \hline
  & \texttt{enroll} & no & no & (0.807, 0.810) & (0.955, 0.957) \\
  & \texttt{enroll} & yes & no & (0.803, 0.807) & (0.954, 0.956) \\
  \hline
  & residuals & no & no & (0.874, 0.876) & (0.973, 0.974) \\
  & residuals & no & yes & (0.875, 0.878) & (0.973, 0.974) \\
  & residuals & yes & no & (0.798, 0.802) & (0.950, 0.951) \\
  & residuals & yes & yes & (0.798, 0.801) & (0.950, 0.951) \\
  \hline
\end{tabular}
\caption{Linear regression models' PI coverage. Average PI coverage of \texttt{api00}, at two different PI levels, under 1000 PPS samples of $n=200$ each from API dataset. Coverages reported as 95\% confidence intervals. For weights proportional to \texttt{enroll} (uninformative for the regression), it makes little difference whether or not we weight the conformal quantiles. For informative weights proportional to full-pop residuals, un-weighted conformal quantiles \emph{over}-cover, and this is fixed by survey-weighted conformal quantiles; but it makes little difference whether or not we fit survey-weighted regression models.}
\label{table:model_covg}
\end{table}

\begin{table}[ht!]
\centering
\begin{tabular}{rlllll}
  \hline
  & PPS probs & Svy-wtd conformal? & Svy-wtd regression? & 80\% PI length & 90\% PI length \\
  \hline
  & \texttt{enroll} & no & no & (192.6, 194.1) & (294.5, 297.3) \\
  & \texttt{enroll} & yes & no & (190.8, 192.7) & (293.7, 297.2) \\
  \hline
  & residuals & no & no & (214.1, 215.6) & (328.2, 331.5) \\
  & residuals & no & yes & (211.8, 213.3) & (331.4, 334.8) \\
  & residuals & yes & no & (179.4, 180.8) & (285.4, 287.7) \\
  & residuals & yes & yes & (175.1, 176.3) & (286.1, 288.3) \\
  \hline
\end{tabular}
\caption{Linear regression models' PI lengths. Average length of PIs for \texttt{api00}, at two different PI levels, under 1000 PPS samples of $n=200$ each from API dataset. Lengths reported as 95\% confidence intervals. For weights proportional to \texttt{enroll} (uninformative for the regression), it makes little difference whether or not we weight the conformal quantiles. For informative weights proportional to full-pop residuals, PIs from un-weighted conformal quantiles are much wider than PIs from survey-weighted conformal quantiles; but it makes little difference whether or not we fit survey-weighted regression models.}
\label{table:model_len}
\end{table}

Overall, the survey-conformal quantiles
we proposed mathematically in Section~\ref{sec:Methods}
also appear to work empirically.
Data analysts will likely get coverage closer to nominal when they account for the weights or other survey design features.

\section{Extensions}\label{sec:Extensions}

We discuss several practical considerations: What if the sampling design does not quite match the situations above? What if the sampling probabilities are not all known? We also suggest other possible use cases for conformal inference in survey methodology.

\subsection{Practical considerations}\label{sec:Practical}

\paragraph{Weighting adjustments:} Most surveys are not released with inverse-probability sampling weights alone. The final survey weights have been adjusted for nonresponse, post-stratification, and other considerations.
\cite{tibshirani2019conformal} found that their conformal methods still maintained coverage close to nominal even when they only estimated the likelihood ratio weights, instead of using true likelihood ratios. We anticipate similar outcomes for conformal methods that use adjusted sampling weights instead of the true inverse-probability sampling weights.

\paragraph{Distribution shift:} Surveys are typically assumed to be sampled from a well-defined finite population, in a specific time and place. We may not be guaranteed coverage if we make predictions for units at future times or from distinct populations.
If model-based (rather than design-based) inference makes more sense for the application at hand, we may still be able to estimate the covariate shift likelihood ratio and apply the conformal techniques of \cite{tibshirani2019conformal}; it will simply no longer be strictly design-based inference, though a joint superpopulation / design-based framework may be a fruitful topic for future work \citep{isaki1982survey,rubin2005two,han2021complex}.

%

\paragraph{Unknown sampling probabilities for test cases:} Our conformal methods of Section~\ref{sec:PPS} require the sampling probabilities for each test case. However, in practice the sampling probabilities are not typically known for every population unit. Even if they are known internally within the survey organization, public release of the entire population's sampling probabilities can increase the risk of unit re-identification and privacy breaches.

Instead, the survey organization could report a set of population categories for which the sampling probabilities are approximately equal within each category, along with their approximate probabilities. There may be categories that are broad enough to minimize privacy risks, but fine enough to approximate the real sampling probabilities well.
Or instead of discrete categories, the survey organization could report a kind of generalized variance function or GVF \citep{wolter2007generalized}, but used to estimate each population unit's sampling probabilities rather than variances, based on covariates available for each unit. 
  This may be especially reasonable for PPS.

Alternatively, the survey organization could release only the value of the single largest inverse-probability sampling weight for the whole population. Then, using that one weight for every test case during conformal prediction would be strictly conservative. Or as an approximation, users could use the largest sampling weight in the public-use dataset.
Users could also try a range of plausible weights for a desired test case, based on the weights of similar in-sample units, and report a sensitivity analysis.

Finally, if a user is interested in predictions for a population unit with the same vector of covariates $X$ as one of the sampled cases in the dataset, they could simply use that sampled case's survey weight.

\subsection{Other uses for conformal methods in survey methodology}

National statistical offices, polling agencies, and any others who collect and pre-process survey data may find their own use cases for conformal methods:

\begin{itemize}
  \item Build a response propensity model based on internal metadata from past surveys. For the next survey, build conformal prediction sets for each sampled unit's most likely mode of response (or ultimate status of nonresponse). Use these sets to help choose the mode of initial contact for each sampled unit.
  \item As part of automated quality control checks, use conformal prediction intervals or sets for each variable to flag potential outliers for followup \citep{bates2023testing}.
  \item To impute for item nonresponse, or to generate synthetic microdata, draw at random from a conformal prediction interval or set for that variable.
\end{itemize}

\section{Conclusion}\label{sec:Conclusion}

There is growing interest in extending machine learning (ML) methods to complex survey data, as well as a distinct body of work around developing prediction intervals for new predictive methods such as those arising from the ML community. While this ongoing work is valuable, design-based conformal inference provides an alternative approach for both needs: We can apply a novel predictive algorithm to complex survey data (even if that algorithm has not been specifically adapted to account for the survey design yet), and automatically get conformal intervals compatible with the fitted prediction function (even if native prediction-interval methods have not been developed for that algorithm yet), and still manage to provide exact, finite-sample, design-based coverage guarantees.


Of course, design-based conformal methods are not a panacea.
First, for methods that do have well-understood design-based adaptations, the design-based version fitted to complex-survey-design training data is likely to be a better predictive model for the conditional mean and consequently to have narrower conformal prediction intervals than when that method ignores the sampling design. As a simple example, in certain cases a survey-weighted linear model may generalize better than an unweighted linear model fit to the same data, so the survey-weighted model's smaller residuals will lead to narrower conformal prediction intervals than for the unweighted model. In this sense, it is still important to develop survey-weighted equivalents of novel ML algorithms.

Second, for models that do have native prediction intervals, if we can justifiably trust that the underlying model assumptions are met, then we may be able to get narrower or less-variable prediction intervals natively than from conformal inference. And in some cases, native methods can give us conditional prediction intervals rather than marginal ones.
In this sense, it is still important to develop native prediction intervals for specific models.

But when either of these conditions is not yet met, conformal inference is a practical and assumption-lean way to fill in the gap. We also note that \cite{lei2018distribution} found in simulations that conventional methods are so noisy for high-dimensional regression that conformal prediction intervals can actually be narrower than native ones.

We have also discussed several pragmatic limitations to conformal inference with survey data: Sampling weights are not truly known in advance, nor can they be reported for every population unit.
Our suggested solutions are only a starting point.

Still, we hope that this paper spurs interest in conformal inference among survey statisticians. Survey data analysts might find these methods directly applicable. 
Furthermore, research into conformal methods from the design-based perspective could bring new insights back to the wider statistics/ML community of conformal inference researchers.

As one example, \cite{vovk2013conditional} 
shows that the distribution of achieved coverage levels has a Beta distribution across calibration sets when data are exchangeable. In order to gauge how much variability to expect from this Beta distribution, \cite{angelopoulos2022gentle} suggest one estimate of the effective sample size $n_\mathrm{eff}$ for weighted conformal methods, but their estimate
is not appropriate for all sampling designs nor for all estimators. Survey statisticians may be able to suggest better estimates of $n_\mathrm{eff}$ or recommend other ways to study and control the variability in achieved coverage for non-exchangeable data.


Other open problems include conformal methods for unequal-probability sampling WOR; unequal-probability cluster sampling; strata with few PSUs; combining strata rather than analyzing each stratum separately; exact coverage guarantees for post-stratification; panel survey designs; or joint superpopulation / design-based frameworks. We encourage survey researchers to contribute to conformal methodology and find new areas of application for these methods.

\section*{Acknowledgements}

The author thanks the Editor, the Associate Editor, and the anonymous referees for their helpful suggestions which improved this manuscript. The author is also grateful to Ryan Tibshirani, Robin Dunn, Benjamin LeRoy, and Evan Randles for their early feedback on several of the ideas presented here.

\bibliographystyle{apalike}
\bibliography{conformal-refs}

\newpage

\appendix

\section{Adaptive prediction regions}\label{sec:Adaptive}

In the simple approach to conformal prediction for regression described in our paper, we have used a constant-width prediction band everywhere, which might be unrealistic for many scenarios. If the true conditional distribution of $Y|X$ is heteroscedastic for instance, marginal coverage will still be correct, but it will be achieved by overcovering at some regions of $X$ and undercovering at others. For ``adaptive'' alternatives, where the PI is wider at regions of $X$ with more variability in $Y$, see recent work on locally-weighted conformal inference \citep{lei2018distribution} and conformalized quantile regression \citep{romano2019conformalized}.

For classification problems, even the simple approach of Section~\ref{sec:Class} produces adaptive prediction sets: for hard test cases where $\hat f$ is uncertain about the right class, prediction sets will be larger than for easy test cases where $\hat f$ confidently assigns most of the probability to one class. See also \cite{romano2020classification} and \cite{angelopoulos2022gentle}.


\paragraph{Varying PI widths due to survey weighting:}
Note that in the survey-weighted setting, we will automatically get slightly different prediction interval widths at different test cases, because the survey weights can differ for each unit being predicted. However, the effect of survey weighting is to widen slightly the prediction intervals for regions of $X$ with a smaller effective sample size. This is distinct from adaptivity to regions with more variability in $Y$.

Loosely, a sampled unit with a larger sampling weight represents more population units, so we are more uncertain about units ``like'' this one. When we apply Lemma~\ref{lem:PPSWR} to a test case with a larger weight, its survey-weighted eCDF is pushed down farther than if it had a small weight; so the nonconformity score quantile is estimated farther to the right; so the conformal prediction interval is wider for units with larger weights, all else being equal.

On the other hand, consider optimal allocation designs, in which strata with higher variance of $Y$ are oversampled. When we construct conformal intervals separately within each stratum, the high-variance strata will have their eCDFs padded by a smaller $1/n_h$ than low-variance strata do, so the quantile adjustment is less conservative. But this should be more than offset by the fact that high-variance strata also have wider spread in $Y$, so that ultimately their prediction intervals will end up wider than those of low-variance strata.


\section{Other conformal methods for cluster samples}\label{sec:Dunn}

Beyond Section~\ref{sec:Cluster}, we briefly note how \cite{dunn2022distribution}'s other methods relate to cluster designs in survey sampling.

\paragraph{Prediction for an observed cluster:} If we only need a prediction interval for new observations from one of the clusters we already sampled, we can simply apply standard conformal methods by only using that cluster's data, which will be exchangeable.

\paragraph{Double conformal:} For unsupervised prediction (where we want a prediction interval for $Y$ without conditioning on covariates $X$) for a new cluster, one could create conformal prediction intervals separately within each cluster, then combine their endpoints appropriately into a single interval. This lines up with the design-based spirit: construct valid estimates within each cluster, then combine them sensibly across clusters. \cite{dunn2022distribution} derive such a method that is guaranteed to have coverage at least $1-\alpha$. However, in simulations it overcovers, with coverage of nearly 1 and wider intervals than other approaches.

\paragraph{Pooling CDFs:} We could first construct eCDFs within each cluster and average them together into one pooled eCDF. Then we could apply standard conformal methods using this pooled eCDF. \cite{dunn2022distribution} only prove that this is asymptotically valid, requiring a continuous distribution for $Y$ as well as a growing number of sampled clusters $k \rightarrow \infty$. This is not possible in the traditional design-based setting of a fixed finite population, although we could construct a superpopulation model and a sequence of growing finite populations that satisfies their requirements.

In simulations, \cite{dunn2022distribution} find that CDF pooling tends to have coverage closest to nominal as well as shortest length of prediction intervals across most settings. But if we do not wish to rely on asymptotic arguments and continuous $Y$ data, repeated subsampling appears to work better than single subsampling or the double conformal method.

\end{document}